\documentclass[11pt]{article}

\usepackage[margin=1in]{geometry}
\usepackage{amsmath,amssymb,amsthm,mathtools}
\usepackage[numbers]{natbib}
\usepackage{microtype}
\usepackage{enumitem}
\usepackage{hyperref}
\usepackage[nameinlink,capitalize]{cleveref}

\crefname{theorem}{theorem}{theorems}
\Crefname{theorem}{Theorem}{Theorems}
\crefname{lemma}{lemma}{lemmas}
\Crefname{lemma}{Lemma}{Lemmas}
\crefname{proposition}{proposition}{propositions}
\Crefname{proposition}{Proposition}{Propositions}
\crefname{corollary}{corollary}{corollaries}
\Crefname{corollary}{Corollary}{Corollaries}
\crefname{definition}{definition}{definitions}
\Crefname{definition}{Definition}{Definitions}
\crefname{remark}{remark}{remarks}
\Crefname{remark}{Remark}{Remarks}
\crefname{example}{example}{examples}
\Crefname{example}{Example}{Examples}

\hypersetup{
  pdftitle={Learnable Predictions Need Not Be Actionable: Proper Repair Dimension for Online Buying},
  pdfauthor={Yushan Li},
  colorlinks=true,
  linkcolor=blue,
  citecolor=blue,
  urlcolor=blue
}

\allowdisplaybreaks

\newtheorem{theorem}{Theorem}[section]
\newtheorem{lemma}[theorem]{Lemma}

\newtheorem{corollary}[theorem]{Corollary}
\newtheorem{definition}[theorem]{Definition}
\newtheorem{remark}[theorem]{Remark}

\newcommand{\PRD}{\mathsf{PRD}}
\newcommand{\Ldim}{\mathsf{Ldim}}
\newcommand{\cost}{\mathsf{cost}}
\newcommand{\Serve}{\mathsf{Serve}}
\newcommand{\uOPT}{u\mathsf{OPT}}
\newcommand{\GammaD}{\Gamma_D}
\newcommand{\GammaZero}{\Gamma_0}
\newcommand{\E}{\mathsf{E}}
\newcommand{\R}{\mathbb{R}}

\title{Learnable Predictions Need Not Be Actionable: Proper Repair Dimension for Online Buying}
\author{Yushan Li}
\date{}

\begin{document}
\maketitle

\begin{abstract}
Learnability and actionability are different requirements for online predictions. Littlestone dimension controls ordinary online learnability, but proper repair dimension controls actionable online buying. A concept class can be easy to learn in the standard mistake-bound sense and still be hard to maintain as a live actionable prediction once the algorithm is required to stay proper and to buy the realized action online.

For a finite binary concept class, the paper defines the \emph{proper repair dimension} $\PRD(\mathcal{H})$ by a dynamic program on version spaces. The value $\PRD(\mathcal{H})$ is exactly the optimal deterministic worst-case number of repairs for a proper learner that must keep a live hypothesis through every realizable labeled sequence. The paper then proves
\[
  \Ldim(\mathcal{H}) \leq \PRD(\mathcal{H}) \leq |\mathcal{H}|-1.
\]
Both bounds are tight: the full class on $d$ coordinates has $\Ldim=\PRD=d$, while the universal coordinate class $U_n$ has $\Ldim(U_n)=\lfloor \log_2 n \rfloor$ but $\PRD(U_n)=n-1$.

This gap transfers directly to online buying. The paper builds a unit-cost actionable buying instance from every proper class $\mathcal{H}$ and proves that the terminal benchmark is $\uOPT=1$ while every deterministic proper actionable algorithm pays exactly $1+\PRD(\mathcal{H})$ in the worst case. For $U_{2^d}$, this gives deterministic cost $2^d$ despite Littlestone dimension $d$.

The paper also gives a positive transfer theorem for componentized actionable prediction classes. If component $i$ has proper repair dimension $D_i$, service-failure count $I_i$, load $L_i=1+D_i+I_i$, scale distortion $\lambda$, and PRD-load congestion $\GammaD$, then Bellman stage repair satisfies
\[
  \cost \leq \lambda \GammaD \uOPT_C.
\]
Under exact service and the uniform bound $D_i \leq D$, this simplifies to
\[
  \cost \leq \lambda (1+D)\GammaZero \uOPT_C.
\]
\end{abstract}

\section{Introduction}

Learnable predictions need not be actionable. An online algorithm benefits from a prediction only after that prediction has been realized as a live object that can actually serve the arriving requests: a chosen resource, a maintained hypothesis, a bought action, or a currently valid certificate. In many online settings, the prediction must also remain realized as a proper live object while evidence arrives and the feasible set shrinks.

Algorithms with predictions usually study the quality of a predicted object: future requests, a candidate optimal solution, a dual witness, or one predictor among several possibilities \citep{lykouris2020competitive,purohit2024improving,bamas2020primal,anand2022multiple,angelopoulos2024untrusted,mitzenmacher2020algorithms,khodak2022learning}. Those models are informative when the prediction is already actionable. The present paper isolates the missing structural question: when does a learnable prediction class admit a low-repair proper actionable realization online?

The gap already appears in the simplest proper online prediction model. For standard online binary prediction, the classical benchmark is Littlestone dimension \citep{littlestone1988learning}. A small Littlestone dimension means that the class is learnable with a finite deterministic mistake bound. That parameter does not control actionability. If the online algorithm must stay proper, keep a live hypothesis, and buy the action realized by that hypothesis online, then the governing quantity is different.

This paper introduces \emph{proper repair dimension}. For a version space $V$ and live hypothesis $h \in V$, the quantity $D(V,h)$ is the worst-case number of future proper repairs after the adversary chooses a labeled example that eliminates $h$. The state value $D(V)$ minimizes over the choice of the current live hypothesis, and
\[
  \PRD(\mathcal{H}) = D(\mathcal{H})
\]
is the resulting proper repair dimension of the class.

The theorem spine is direct.

\begin{enumerate}[leftmargin=2em]
  \item \textbf{Exact proper repair characterization.} The value $\PRD(\mathcal{H})$ is exactly the optimal deterministic worst-case number of future repairs for a proper learner that must keep a live hypothesis in the current version space.
  \item \textbf{Comparison with Littlestone dimension.} Every finite binary class satisfies
  \[
    \Ldim(\mathcal{H}) \leq \PRD(\mathcal{H}) \leq |\mathcal{H}|-1.
  \]
  \item \textbf{Tight examples.} The full class on $d$ coordinates satisfies $\Ldim=\PRD=d$. The universal coordinate class $U_n$ satisfies
  \[
    \Ldim(U_n)=\lfloor \log_2 n \rfloor,
    \qquad
    \PRD(U_n)=n-1.
  \]
  In particular, when $n=2^d$ the gap is exponential in the Littlestone dimension.
  \item \textbf{Actionable online buying lower bound.} The induced proper actionable buying instance from $\mathcal{H}$ has terminal benchmark $\uOPT=1$ but deterministic proper actionable cost
  \[
    1+\PRD(\mathcal{H}).
  \]
  Hence $U_{2^d}$ yields deterministic cost $2^d$ despite $\Ldim=d$.
  \item \textbf{Positive transfer theorem.} For component classes with loads $L_i=1+D_i+I_i$, scale distortion $\lambda$, and PRD-load congestion $\GammaD$, Bellman stage repair satisfies
  \[
    \cost \leq \lambda \GammaD \uOPT_C.
  \]
\end{enumerate}

The main message is sharp.
\begin{quote}
Small Littlestone dimension guarantees learnability. Small proper repair dimension guarantees actionability.
\end{quote}

PRD is certified by the full chain of results: an exact proper repair characterization, an exponential separation from Littlestone dimension, and an exact online-buying cost formula with terminal benchmark $\uOPT=1$. This chain makes PRD an operational invariant for actionable online buying by tying the version-space recursion to a sharp realizable cost separation.

These are different dimensions because they solve different online tasks. Littlestone dimension measures how many mistakes an online predictor must make before it identifies the hidden label rule. Proper repair dimension measures how many times a proper actionable predictor can be forced to abandon its current live hypothesis while the sequence stays realizable. Once predictions have to be maintained online as bought actions, that second task is the one that governs utility.

Proper repair dimension can be viewed as the unit-cost version-space specialization of a repair dynamic program. This specialization exposes the exact combinatorics needed for the Littlestone comparison and the online-buying reduction, turning the gap into a direct separation between learnability and actionable utility.

\paragraph{Relation to prior work.}
Learning-augmented online algorithms study how predictions improve online competitive guarantees in caching-style problems and many broader settings \citep{lykouris2020competitive,purohit2024improving,bamas2020primal,anand2022multiple,khodak2022learning,azar2021graph,coester2026dual,ameli2025covering,mitzenmacher2020algorithms}. This line builds on the classical competitive-analysis viewpoint for paging and list update \citep{sleator1985amortized}. Robustness to unreliable predictions is a parallel theme, including untrusted-prediction models for online computation and metrical settings \citep{angelopoulos2024untrusted,antoniadis2023online}.

On the online learning side, Littlestone dimension gives the classical mistake-tree characterization of realizable learnability \citep{littlestone1988learning}. Recent work sharpens this picture with randomized Littlestone dimension and computable optimal online learners, and it makes the role of Standard-Optimal-Algorithm-type recursions explicit \citep{filmus2023rldim,hasrati2023computable}.

Advice complexity and partial-information models study how additional trusted or untrusted information changes online competitiveness \citep{emek2011advice,renault2015advice,boyar2016survey,angelopoulos2024untrusted}. Replay-adversary online learning exhibits a proper/improper separation \citep{dmitriev2025replay}. The present paper identifies an actionable online-buying form of this properness distinction: the prediction object is a proper concept class together with an action realization map, the terminal benchmark remains $\uOPT=1$, the worst-case deterministic proper actionable cost becomes $1+\PRD(\mathcal{H})$, and a matching positive transfer theorem shows that bounded repair dimension and bounded PRD-load congestion suffice for strong buying guarantees.

\paragraph{Paper map.}
\Cref{sec:prd} defines proper repair dimension and proves the exact characterization. \Cref{sec:ldim} compares it to Littlestone dimension and proves the tight examples. \Cref{sec:buying-gap} turns the gap into an online buying lower bound. \Cref{sec:transfer} gives the positive component transfer theorem.

\section{Proper Repair Dimension}\label{sec:prd}

Fix a finite domain-independent binary concept class
\[
  \mathcal{H} \subseteq \{0,1\}^X.
\]
For a version space $V \subseteq \mathcal{H}$, a point $x \in X$, and a label $b \in \{0,1\}$, write
\[
  V_{x,b} = \{ g \in V : g(x)=b \}.
\]

\begin{definition}[Proper repair process]
A realizable labeled sequence
\[
  (x_1,b_1),\dots,(x_T,b_T)
\]
generates version spaces
\[
  V_0=\mathcal{H},
  \qquad
  V_t=(V_{t-1})_{x_t,b_t},
  \qquad
  t=1,\dots,T,
\]
with every $V_t$ nonempty.

A \emph{proper repair strategy} chooses a live hypothesis
\[
  h_t \in V_t
\]
for every time $t$. A \emph{repair} occurs at time $t \geq 1$ if $h_t \neq h_{t-1}$.
\end{definition}

\begin{definition}[Legal eliminations and proper repair dimension]
For a nonempty version space $V \subseteq \mathcal{H}$ and a live hypothesis $h \in V$, define the set of legal eliminations
\[
  \E(V,h)=\{V_{x,b} : V_{x,b}\neq\emptyset \text{ and } h(x)\neq b\}.
\]
Define
\[
  D(V,h)=
  \begin{cases}
    0, & \E(V,h)=\emptyset,\\[2mm]
    \displaystyle \max_{W\in \E(V,h)} \bigl(1+D(W)\bigr), & \text{otherwise,}
  \end{cases}
\]
and
\[
  D(V)=\min_{h \in V} D(V,h).
\]
The \emph{proper repair dimension} of the class is
\[
  \PRD(\mathcal{H}) = D(\mathcal{H}).
\]
\end{definition}

The recursion charges only the next time the current live hypothesis is actually falsified. Valid labeled examples that shrink the version space while leaving the live hypothesis correct cost nothing immediately. Their only effect is to change which future legal eliminations remain available.

\begin{lemma}[Restriction monotonicity]\label{lem:restriction}
If $V \subseteq \mathcal{H}$ is nonempty and $W=V_{x,b}$ is a nonempty one-step restriction of $V$, then
\[
  D(W)\leq D(V).
\]
\end{lemma}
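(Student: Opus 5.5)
Since $W=V_{x,b}\subseteq V$ and $D(V)=\min_{h\in V} D(V,h)$, I will prove the stronger pointwise statement: for all nonempty $W\subseteq V\subseteq\mathcal{H}$ such that $W$ is obtained from $V$ by a finite sequence of one-step restrictions, and every $h\in W$,
\[
  D(W,h)\leq D(V,h).
\]
The lemma then follows by taking $h$ to be a minimizer of $D(V,\cdot)$. If the minimizer lies in $W$, we get $D(W)\leq D(W,h)\leq D(V,h)=D(V)$. If it does not, a separate argument is needed (third paragraph). The induction is on $|V|$, or more precisely on $|V|+|W|$. The base case is immediate because $D(\cdot,\cdot)\geq 0$ is minimal when no legal eliminations exist.

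The key structural observation is that restriction commutes: $(V_{x,b})_{y,c}=(V_{y,c})_{x,b}\subseteq V_{y,c}$. Hence every legal elimination of $(W,h)$, namely $W_{y,c}\neq\emptyset$ with $h(y)\neq c$, is a one-step restriction of the set $V_{y,c}$. Moreover, $V_{y,c}$ is itself a legal elimination of $(V,h)$: it is nonempty because it contains $W_{y,c}$, and $h(y)\neq c$. So for each $W'\in\E(W,h)$ there is $V'\in\E(V,h)$ with $W'=V'_{x,b}$, a (possibly trivial) one-step restriction. By the induction hypothesis at the set level, applied to the smaller pair, we get $D(W')\leq D(V')$. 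Therefore
\[
  D(W,h)=\max_{W'}(1+D(W'))\leq\max_{V'}(1+D(V'))=D(V,h),
\]
and the case $\E(W,h)=\emptyset$ is trivial. This step needs the set-level statement $D(W')\leq D(V')$ for strictly smaller $V'$, which is why the induction is a joint one on the set statement and the pointwise statement.

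The main obstacle is the set-level step when the optimal live hypothesis $h^*$ of $V$ is not in $W$, i.e.\ $h^*(x)\neq b$. In that case, however, $W=V_{x,b}$ is itself a legal elimination in $\E(V,h^*)$. The recursion then gives directly $D(V)=D(V,h^*)\geq 1+D(W)>D(W)$. When $h^*\in W$, the pointwise inequality from the second paragraph applies. So both cases close, with the induction on $|V|$ justified because each $V'\in\E(V,h)$ is a proper subset of $V$ (it excludes $h$). I will state the joint induction carefully, since the pointwise claim for $V$ calls on the set claim only for strictly smaller $V'$, which keeps it well-founded.
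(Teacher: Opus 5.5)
Your proof is correct and is essentially the paper's argument: induct on $|V|$, take a minimizer $h^\star$ of $D(V,\cdot)$, settle the case $h^\star(x)\neq b$ by $W\in\E(V,h^\star)$, and otherwise send each $W_{y,c}\in\E(W,h^\star)$ to $V_{y,c}\in\E(V,h^\star)$ with $W_{y,c}=(V_{y,c})_{x,b}$, applying the induction hypothesis to the strictly smaller $V_{y,c}$. Your joint induction with the pointwise claim $D(W,h)\le D(V,h)$ anticipates the paper's next lemma (pair monotonicity), and the multi-step generalization you announce is harmless but not needed here.
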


\begin{proof}
We induct on $|V|$. The claim is trivial when $|V|=1$.

Fix a nonempty proper restriction $W=V_{x,b}$ and let $h^\star \in V$ satisfy
\[
  D(V)=D(V,h^\star).
\]
If $h^\star(x)\neq b$, then $W \in \E(V,h^\star)$, so
\[
  D(V)=D(V,h^\star)\geq 1+D(W) > D(W).
\]

Now assume $h^\star(x)=b$, so $h^\star \in W$. Let $Y$ be any legal elimination in $\E(W,h^\star)$. Then
\[
  Y=W_{y,c}
\]
for some $y \in X$ and $c \neq h^\star(y)$. Let
\[
  U=V_{y,c}.
\]
Because $h^\star(y)\neq c$, the set $U$ is a legal elimination in $\E(V,h^\star)$, and
\[
  Y = U_{x,b}.
\]
Since $|U|<|V|$, the induction hypothesis gives
\[
  D(Y)\leq D(U).
\]
Therefore
\[
  D(W)
  \leq D(W,h^\star)
  = \max_{Y\in \E(W,h^\star)} \bigl(1+D(Y)\bigr)
  \leq \max_{U\in \E(V,h^\star)} \bigl(1+D(U)\bigr)
  = D(V,h^\star)
  = D(V).
\]
\end{proof}

\begin{lemma}[Pair monotonicity under surviving restrictions]\label{lem:pair-monotone}
If $W=V_{x,h(x)}$ is a nonempty one-step restriction that keeps $h \in V$ live, then
\[
  D(W,h)\leq D(V,h).
\]
\end{lemma}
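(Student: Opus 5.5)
The plan is to compare the two maxima in the definitions of $D(W,h)$ and $D(V,h)$ term by term, reusing the embedding from the proof of \cref{lem:restriction} together with that lemma itself.

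\textbf{Trivial case.} If $\E(W,h)=\emptyset$, then $D(W,h)=0\leq D(V,h)$ and there is nothing to prove.

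\textbf{Main case.} Otherwise, fix any $Y\in\E(W,h)$. Write $Y=W_{y,c}$ with $Y\neq\emptyset$ and $c\neq h(y)$, and set $U=V_{y,c}$.

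\begin{itemize}
\item Since $W\subseteq V$, we have $\emptyset\neq Y\subseteq U$, so $U$ is nonempty. Because $h(y)\neq c$, this makes $U\in\E(V,h)$.
\item Filtering commutes, so $Y=U_{x,h(x)}$. Thus $Y$ is a nonempty one-step restriction of $U$.
\item \Cref{lem:restriction} applied to $U$ then gives $D(Y)\leq D(U)$.
\end{itemize}

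Taking the maximum over $Y$ yields
\[
D(W,h)=\max_{Y\in\E(W,h)}\bigl(1+D(Y)\bigr)\leq \max_{U\in\E(V,h)}\bigl(1+D(U)\bigr)=D(V,h).
\]

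No induction is needed, because the recursion appears only through \cref{lem:restriction}, which is already established for all nonempty version spaces.

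\textbf{Main obstacle.} There is little difficulty here. The step that needs care is checking that $U$ is a \emph{legal} elimination of $V$ for the same live hypothesis $h$: it must be nonempty, and $h$ must be falsified by the example $(y,c)$. Both facts follow directly from the choice of $Y$ and $c$, as shown above.
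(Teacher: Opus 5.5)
Your proposal is correct and matches the paper's proof essentially step for step: the same map $Y=W_{y,c}\mapsto U=V_{y,c}\in\E(V,h)$, the identity $Y=U_{x,h(x)}$, and an appeal to \cref{lem:restriction} before taking maxima. Your explicit check that $U\neq\emptyset$ because $Y\subseteq U$ is a small detail the paper leaves implicit.
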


\begin{proof}
If $\E(W,h)=\emptyset$, the claim is immediate. Otherwise every $Y \in \E(W,h)$ has the form
\[
  Y=W_{y,c}
\]
with $c \neq h(y)$. Let
\[
  U=V_{y,c}.
\]
Then $U \in \E(V,h)$ and
\[
  Y = U_{x,h(x)}.
\]
By Lemma~\ref{lem:restriction},
\[
  D(Y)\leq D(U).
\]
Taking the maximum over all legal eliminations of $h$ from $W$ gives
\[
  D(W,h)\leq D(V,h).
\]
\end{proof}

\begin{theorem}[Exact proper repair characterization]\label{thm:exact-prd}
For every finite binary concept class $\mathcal{H}$,
\[
  \PRD(\mathcal{H})
\]
is exactly the optimal deterministic worst-case number of repairs for a proper learner that must keep a live hypothesis in the current version space.
\end{theorem}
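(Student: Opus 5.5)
We prove the two inequalities separately, both by induction on $|V|$, in the following slightly stronger form. For every nonempty version space $V$ reachable in the process, $D(V)$ equals the game value $G(V)$. Here $G(V)$ is the minimax number of repairs the learner incurs from state $V$ onward: the learner first picks $h\in V$ at no cost, and the adversary then presents realizable examples. The learner also needs a refined value $G(V,h)$, the minimax number of future repairs when the current live hypothesis is already fixed at $h$. We will show $G(V,h)=D(V,h)$, and taking the minimum over $h$ gives $G(V)=D(V)$. Since the initial hypothesis $h_0$ is chosen freely and a repair is counted only when $h_t\neq h_{t-1}$ for $t\ge 1$, this identifies $\PRD(\mathcal{H})=D(\mathcal{H})$ with the optimal deterministic worst-case repair count.

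\textbf{Upper bound (learner strategy).} The learner plays the ``Bellman'' strategy. It keeps $h$ as long as $h$ survives. When an example $(x,b)$ eliminates $h$, so that the new version space is $W=V_{x,b}\in\E(V,h)$, it switches to a minimizer $h'$ of $D(W,\cdot)$. We show by induction on $|V|$ that from state $(V,h)$ this strategy incurs at most $D(V,h)$ repairs.

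Consider the first falsifying example. Before it arrives, the adversary may issue any number of non-falsifying examples, each of the form $(x,h(x))$. These shrink $V$ to some $V'\ni h$ at zero cost. By \Cref{lem:pair-monotone}, applied repeatedly along that chain, $D(V',h)\le D(V,h)$. The falsifying example then produces some $W\in\E(V',h)$, at a cost of one repair. By induction, since $|W|<|V|$, the remaining repairs are at most $D(W)=D(W,h')$. The total is therefore at most $1+D(W)\le D(V',h)\le D(V,h)$.

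If the sequence ends or no falsification ever occurs, the cost is $0\le D(V,h)$. One subtlety needs care: the non-falsifying prefix must not be counted against the induction. This is fine because the induction is on $|V|$ and falsification strictly shrinks the version space, while the monotonicity lemma handles the prefix.

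\textbf{Lower bound (adversary strategy).} Fix any deterministic proper learner. We show by induction on $|V|$ that the adversary can force at least $D(V,h)$ repairs from state $(V,h)$, and hence at least $D(V)$ whatever initial $h$ the learner chooses.

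If $\E(V,h)=\emptyset$, there is nothing to prove. Otherwise the adversary presents the example $(x,b)$ attaining the maximum in the definition of $D(V,h)$. The resulting $W=V_{x,b}$ is nonempty, so the sequence stays realizable, and $h\notin W$, so the learner must pick some $h'\in W$ with $h'\neq h$, which costs one repair. By induction, the adversary then forces at least $D(W,h')\ge D(W)$ further repairs. The total is at least $1+D(W)=D(V,h)$.

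Both directions together yield the theorem. The main obstacle is the upper bound's handling of zero-cost shrinking examples. There the adversary might hope to steer the version space toward one with larger repair value without paying, and this is exactly what \Cref{lem:restriction} and \Cref{lem:pair-monotone} rule out. We therefore expect the proof to consist mainly of a careful statement of the game value and two short inductions invoking these lemmas.
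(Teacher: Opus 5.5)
Your proposal is correct and matches the paper's proof: a pair-level induction on $|V|$, the Bellman strategy for the upper bound, and the one-step maximizing adversary for the lower bound. The only difference is in the upper bound, where you absorb the $h$-consistent prefix with Lemma~\ref{lem:pair-monotone} to get $1+D(W)\le D(V',h)\le D(V,h)$. The paper instead commutes that prefix past the eliminating example and applies Lemma~\ref{lem:restriction} to $U=V_{x,b}\in\E(V,h)$ to get $D(W)\le D(U)$, which is an equivalent step.
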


\begin{proof}
We prove the stronger pair-level statement: for every nonempty version space $V$ and live hypothesis $h \in V$, the value $D(V,h)$ is the optimal deterministic worst-case number of future repairs from the pair $(V,h)$. The theorem follows by minimizing over the initial choice of $h \in \mathcal{H}$.

We induct on $|V|$. If $\E(V,h)=\emptyset$, no valid future labeled example can eliminate $h$, so the optimal future repair count is $0=D(V,h)$.

For the upper bound, use the Bellman strategy: whenever the current live hypothesis is eliminated and the new version space is $W$, switch to a minimizer of $D(W,g)$ over $g \in W$.

Fix a continuation from $(V,h)$ and let $(x,b)$ be the first labeled example that eliminates $h$, if any exists. Let $W$ be the version space after that example. Every earlier labeled example is correct on $h$, so starting from the one-step legal elimination
\[
  U=V_{x,b} \in \E(V,h),
\]
the same earlier labels simply realize $W$ as a sequence of further nonempty restrictions of $U$. Repeated applications of Lemma~\ref{lem:restriction} give
\[
  D(W)\leq D(U).
\]
After repairing at $W$ to a Bellman minimizer, the induction hypothesis bounds the remaining future repairs by $D(W)$. Hence the total future repairs from $(V,h)$ on this continuation are at most
\[
  1 + D(W)
  \leq
  1 + D(U)
  \leq
  \max_{U' \in \E(V,h)} \bigl(1+D(U')\bigr)
  =
  D(V,h).
\]

For the lower bound, let
\[
  U^\star \in \arg\max_{U \in \E(V,h)} \bigl(1+D(U)\bigr).
\]
The adversary reveals the single labeled example witnessing $U^\star \in \E(V,h)$ immediately. The learner must repair once and choose some new live hypothesis $g \in U^\star$. By the induction hypothesis, the adversary can then force at least $D(U^\star,g)\geq D(U^\star)$ additional repairs. Thus every deterministic strategy from $(V,h)$ suffers at least
\[
  1 + D(U^\star) = D(V,h)
\]
future repairs.

The pair-level claim follows. At the initial state, the learner chooses some $h_0 \in \mathcal{H}$ and then faces future repair cost $D(\mathcal{H},h_0)$. Therefore the optimal deterministic worst-case repair count is
\[
  \min_{h_0 \in \mathcal{H}} D(\mathcal{H},h_0)=D(\mathcal{H})=\PRD(\mathcal{H}).
\]
\end{proof}

\begin{remark}
Proper repair dimension is the unit-cost version-space specialization of repair value. The dynamic-programming viewpoint is the same; the present recursion strips away general certificate weights and exposes the exact proper combinatorics.
\end{remark}

\section{Littlestone Dimension and Tight Gaps}\label{sec:ldim}

\begin{definition}[Recursive Littlestone dimension]
For every nonempty finite version space $V \subseteq \mathcal{H}$, define $\Ldim(V)$ recursively by
\[
  \Ldim(V)=0
\]
if for every point $x \in X$, at most one of $V_{x,0}$ and $V_{x,1}$ is nonempty. Otherwise
\[
  \Ldim(V)
  =
  1+\max_{x : V_{x,0}\neq\emptyset,\; V_{x,1}\neq\emptyset}
  \min\bigl\{\Ldim(V_{x,0}),\Ldim(V_{x,1})\bigr\}.
\]
This is the usual Littlestone dimension of the class, specialized to finite version spaces \citep{littlestone1988learning}.
\end{definition}

\begin{theorem}[Proper repair dimension versus Littlestone dimension]\label{thm:ldim-prd}
Every finite binary concept class $\mathcal{H}$ satisfies
\[
  \Ldim(\mathcal{H}) \leq \PRD(\mathcal{H}) \leq |\mathcal{H}|-1.
\]
\end{theorem}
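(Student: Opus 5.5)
We prove both inequalities by induction on the size of the version space, using the recursive definitions directly and working with the stronger statement for every nonempty version space $V \subseteq \mathcal{H}$:
\[
  \Ldim(V) \leq D(V) \leq |V|-1 .
\]
The theorem is the case $V=\mathcal{H}$.

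\textbf{Upper bound.} We show $D(V,h)\leq |V|-1$ for every $h\in V$, by induction on $|V|$. If $\E(V,h)=\emptyset$ the value is $0\leq |V|-1$. Otherwise, every legal elimination $W=V_{x,b}\in\E(V,h)$ omits $h$, since $h(x)\neq b$, and is nonempty. Hence $|W|\leq |V|-1$, and the induction hypothesis gives $D(W)\leq |W|-1\leq |V|-2$. It follows that
\[
  D(V,h)=\max_{W\in\E(V,h)}\bigl(1+D(W)\bigr)\leq |V|-1 .
\]
Minimizing over $h$ yields $D(V)\leq |V|-1$. The same argument can be phrased through \cref{thm:exact-prd}: every repair permanently discards the abandoned hypothesis, so there are at most $|V|-1$ repairs.

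\textbf{Lower bound.} We prove $\Ldim(V)\leq D(V,h)$ for every $h\in V$, again by induction on $|V|$. If $\Ldim(V)=0$ there is nothing to show. Otherwise, pick a point $x$ attaining the maximum in the definition, so that both $V_{x,0}$ and $V_{x,1}$ are nonempty and
\[
  \min\bigl\{\Ldim(V_{x,0}),\Ldim(V_{x,1})\bigr\}=\Ldim(V)-1 .
\]
Fix any $h\in V$ and set $b=1-h(x)$. Then $W=V_{x,b}$ is nonempty and $h(x)\neq b$, so $W\in\E(V,h)$. Since $|W|<|V|$, the induction hypothesis applied to $W$ with each of its hypotheses gives $D(W)=\min_{g\in W}D(W,g)\geq \Ldim(W)$. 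Because $\Ldim(W)$ is at least the minimum over the two branches, $\Ldim(W)\geq \Ldim(V)-1$. Therefore
\[
  D(V,h)\geq 1+D(W)\geq \Ldim(V).
\]
Minimizing over $h$ gives $\Ldim(V)\leq D(V)$.

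The key point is that the adversary chooses the branch against the learner's current $h$. This is why the minimum over branches in the definition of $\Ldim$ suffices, and it is the only step that needs care: the induction must be stated for all sub-version-spaces, not just for $\mathcal{H}$. \cref{lem:restriction} is not needed, because the Littlestone split at $x$ is itself a one-step legal elimination.
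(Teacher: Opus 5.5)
Your proof is correct and matches the paper's argument step for step. The upper bound uses the same induction on $|V|$, based on the fact that every legal elimination is a nonempty proper subset of $V$, and the lower bound uses the same choice of a Littlestone-maximizing point $x$ with the branch $V_{x,1-h(x)}$ as a legal elimination against the current $h$ (stating the induction at the pair level $(V,h)$ rather than for $D(V)$ is only cosmetic).
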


\begin{proof}
We prove the upper bound first. More generally, for every nonempty version space $V$,
\[
  D(V)\leq |V|-1.
\]
Induct on $|V|$. The statement is trivial for $|V|=1$. If $h \in V$ and $W \in \E(V,h)$, then $W$ is a nonempty proper subset of $V$, so the induction hypothesis gives
\[
  D(W)\leq |W|-1 \leq |V|-2.
\]
Hence
\[
  D(V,h)\leq |V|-1
\]
for every $h \in V$, and minimizing over $h$ yields $D(V)\leq |V|-1$. Applying this to $V=\mathcal{H}$ proves $\PRD(\mathcal{H})\leq |\mathcal{H}|-1$.

For the lower bound, we show by induction on $|V|$ that
\[
  \Ldim(V)\leq D(V)
\]
for every nonempty version space $V$. If $\Ldim(V)=0$, there is nothing to prove. Otherwise choose a point $x$ that attains the maximum in the recursive definition:
\[
  \Ldim(V)
  =
  1+\min\bigl\{\Ldim(V_{x,0}),\Ldim(V_{x,1})\bigr\}.
\]
Both branches are nonempty. Fix any $h \in V$ and set
\[
  b = 1-h(x).
\]
Then $V_{x,b}\in \E(V,h)$. By induction,
\[
  D(V_{x,b})\geq \Ldim(V_{x,b}).
\]
Therefore
\[
  D(V,h)
  \geq
  1 + D(V_{x,b})
  \geq
  1 + \Ldim(V_{x,b})
  \geq
  1 + \min\bigl\{\Ldim(V_{x,0}),\Ldim(V_{x,1})\bigr\}
  =
  \Ldim(V).
\]
Since this holds for every $h \in V$, we obtain $D(V)\geq \Ldim(V)$. Applying it to $V=\mathcal{H}$ gives
\[
  \PRD(\mathcal{H})=D(\mathcal{H})\geq \Ldim(\mathcal{H}).
\]
\end{proof}

\begin{theorem}[Full class equality]\label{thm:full-class}
Let
\[
  C_d = \{0,1\}^{[d]}.
\]
Then
\[
  \Ldim(C_d)=\PRD(C_d)=d.
\]
\end{theorem}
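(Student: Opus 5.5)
We will prove the three quantities are sandwiched: we show $\Ldim(C_d)\geq d$ directly and $\PRD(C_d)\leq d$ directly, and then \Cref{thm:ldim-prd} gives $\Ldim(C_d)\leq\PRD(C_d)$, which closes the chain $d\leq \Ldim(C_d)\leq \PRD(C_d)\leq d$. Here $X=[d]$ and $C_d$ is all $2^d$ labelings. The relevant version spaces are subcubes: fixing the labels on a set $S\subseteq[d]$ of coordinates leaves the class isomorphic to $C_{d-|S|}$ on the free coordinates. Every nonempty version space reachable from $C_d$ is of this form, since each labeled example $(x,b)$ either fixes a free coordinate or repeats an already fixed one.

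For the Littlestone lower bound, we show by induction on $k$ that a subcube with $k$ free coordinates has $\Ldim\geq k$. For $k=0$ this is trivial. For $k\geq 1$, pick any free coordinate $x$. Both $V_{x,0}$ and $V_{x,1}$ are nonempty subcubes with $k-1$ free coordinates. The recursive definition then gives $\Ldim(V)\geq 1+\min\{\Ldim(V_{x,0}),\Ldim(V_{x,1})\}\geq 1+(k-1)=k$.

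For the PRD upper bound, we show by induction on $k$ that a subcube $V$ with $k$ free coordinates has $D(V,h)\leq k$ for every $h\in V$, and hence $D(V)\leq k$. Fix $h\in V$.

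\begin{itemize}
\item A legal elimination $V_{x,b}$ with $h(x)\neq b$ and $V_{x,b}\neq\emptyset$ forces $x$ to be a free coordinate. On a fixed coordinate, $h$ agrees with every member of $V$, so the opposite label would give the empty set.
\item Therefore every $W\in\E(V,h)$ is a subcube with $k-1$ free coordinates.
\item By induction $D(W)\leq k-1$, so $D(V,h)\leq 1+(k-1)=k$.
\item If $k=0$, then $\E(V,h)=\emptyset$ and the value is $0$.
\end{itemize}

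Applying both inductions at $k=d$ yields the claim. The only delicate point is checking that restrictions of a subcube stay subcubes and that eliminations must hit a free coordinate. Both facts follow immediately from the product structure of $C_d$, so we expect no real obstacle.
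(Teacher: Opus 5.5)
Your proof is correct, but it is organized differently from the paper's. The paper does not invoke \Cref{thm:ldim-prd} here. It computes both quantities exactly on every subcube $C(\sigma)$, by induction on the number $q$ of free coordinates. It shows $D(C(\sigma),h)=q$ for every $h$, because eliminations exist and each one fixes a free coordinate. Separately, it shows $\Ldim(C(\sigma))=q$, because the splitting points are exactly the free coordinates and each yields two subcubes with $q-1$ free coordinates. You instead prove only the easy one-sided bound for each quantity: $\Ldim\geq d$ by exhibiting a single split, and $\PRD\leq d$ because every legal elimination must hit a free coordinate. The general inequality $\Ldim\leq\PRD$ then closes the sandwich $d\leq\Ldim(C_d)\leq\PRD(C_d)\leq d$. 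Your route never has to show that a legal elimination exists or bound $\Ldim$ from above, at the cost of depending on \Cref{thm:ldim-prd}. The paper's route is self-contained and gives the exact values of both quantities at every intermediate subcube. Your sandwich still recovers $D(C_d,h)=d$ for every $h\in C_d$, since the minimum over $h$ already meets your uniform upper bound.
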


\begin{proof}
For a partial assignment $\sigma:J \to \{0,1\}$, let
\[
  C(\sigma)=\{h \in C_d : h|_J=\sigma\}.
\]
Write $q=d-|J|$ for the number of free coordinates.

We first prove by induction on $q$ that
\[
  D(C(\sigma))=q.
\]
When $q=0$, the class is a singleton and the value is $0$. Assume $q\geq 1$ and fix any $h \in C(\sigma)$. Every legal elimination of $h$ fixes some free coordinate $j \notin J$ to the opposite label $1-h(j)$, and the resulting child class is exactly a set of the form
\[
  C(\sigma \cup \{j \mapsto 1-h(j)\})
\]
with $q-1$ free coordinates. By induction, every such child has value $q-1$. Hence
\[
  D(C(\sigma),h)=1+(q-1)=q
\]
for every $h \in C(\sigma)$, and therefore $D(C(\sigma))=q$. Taking $J=\emptyset$ gives
\[
  \PRD(C_d)=D(C_d)=d.
\]

For Littlestone dimension, the same free-coordinate recursion holds. If $q\geq 1$, querying any free coordinate produces two children that are again full classes on $q-1$ free coordinates, so
\[
  \Ldim(C(\sigma))=1+(q-1)=q.
\]
Applying this at $\sigma=\emptyset$ yields $\Ldim(C_d)=d$.
\end{proof}

\begin{theorem}[Universal coordinate class]\label{thm:universal-coordinate}
For $n \geq 1$, let
\[
  U_n=\{h_1,\dots,h_n\}\subseteq \{0,1\}^{2^{[n]}}
\]
be the class over the domain $2^{[n]}$ defined by
\[
  h_i(S)=1 \quad \Longleftrightarrow \quad i \in S.
\]
Then
\[
  \PRD(U_n)=n-1
  \qquad\text{and}\qquad
  \Ldim(U_n)=\lfloor \log_2 n \rfloor.
\]
\end{theorem}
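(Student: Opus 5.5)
The plan is to treat the two equalities separately, working throughout with sub-version-spaces of $U_n$. The key observation is that every subset $V\subseteq U_n$ is itself a one-step restriction of $U_n$. Write $A\subseteq[n]$ for the index set of $V$, so that $V=\{h_i: i\in A\}$. Then for every $S\subseteq A$ we have $V_{S,1}=\{h_i : i\in S\}$ and $V_{S,0}=\{h_i: i\in A\setminus S\}$. Consequently, from any version space indexed by $A$, the restrictions reachable in one step are exactly the nonempty subsets of $A$, each determined by which side of the query it lies on.

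For $\PRD(U_n)=n-1$, the upper bound is already \cref{thm:ldim-prd}. For the lower bound I will show by induction on $|A|$ that $D(V_A)=|A|-1$, where $V_A=\{h_i:i\in A\}$. Fix $h_j\in V_A$ with $|A|\ge 2$. The adversary queries the point $S=A\setminus\{j\}$ with label $b=1$. Since $h_j(S)=0\neq 1$, this is a legal elimination, and the resulting restriction is $V_{A\setminus\{j\}}$, which is nonempty. The induction hypothesis then gives
\[
  D(V_A,h_j)\ge 1+D(V_{A\setminus\{j\}})=1+(|A|-2)=|A|-1 .
\]
This holds for every choice of $h_j$, so minimizing over $j$ gives $D(V_A)\ge |A|-1$. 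Combined with the upper bound $D(V)\le |V|-1$ from the proof of \cref{thm:ldim-prd}, we get equality. Taking $A=[n]$ yields $\PRD(U_n)=n-1$. In effect, the adversary kills exactly one hypothesis at a time, namely the live one.

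For $\Ldim(U_n)=\lfloor\log_2 n\rfloor$, I will prove by induction on $m=|A|$ that $\Ldim(V_A)=\lfloor\log_2 m\rfloor$. Since $V_A$ has the same structure as $U_m$, its value depends only on $m$; call it $f(m)$. Every split point partitions $A$ into two nonempty parts, and conversely every such partition is realized by some point. The recursion therefore reads
\[
  f(1)=0,\qquad f(m)=1+\max_{1\le k\le m-1}\min\{f(k),f(m-k)\}.
\]
The induction step proceeds as follows.
\begin{enumerate}
  \item Because $f$ is nondecreasing by induction, the inner minimum is $f(\min\{k,m-k\})$.
  \item The maximum over $k$ is therefore attained at $k=\lfloor m/2\rfloor$.
  \item This gives $f(m)=1+\lfloor\log_2\lfloor m/2\rfloor\rfloor=\lfloor\log_2 m\rfloor$.
\end{enumerate}
Alternatively, one can argue the two directions directly. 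The upper bound $\Ldim\le\log_2|V|$ is the standard fact that a mistake tree of depth $k$ needs $2^k$ distinct leaf hypotheses. The lower bound comes from a balanced halving tree of depth $\lfloor\log_2 n\rfloor$ built on $2^{\lfloor\log_2 n\rfloor}$ of the indices, using the points $S$ to realize each split.

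The only mildly delicate step is the floor identity $\lfloor\log_2\lfloor m/2\rfloor\rfloor+1=\lfloor\log_2 m\rfloor$ for $m\ge2$, together with checking monotonicity of $f$ inside the induction. Neither should cause trouble. The conceptual point to state clearly is that the domain $2^{[n]}$ realizes every subset as a single restriction. This is what makes both recursions reduce to functions of $|A|$ alone.
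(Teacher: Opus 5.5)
Your proof is correct. The PRD half is the paper's argument exactly: the adversary queries $S=A\setminus\{j\}$ with label $1$, which removes only the live hypothesis, and the upper bound is $D(V)\le|V|-1$.

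For the Littlestone half, your main route differs from the paper's. You note that every bipartition of $A$ is realized by some point of $2^{[n]}$. Hence $\Ldim(V_A)$ depends only on $m=|A|$ and satisfies the closed recursion
\[
f(m)=1+\max_{1\le k\le m-1}\min\{f(k),f(m-k)\}.
\]
You then solve it exactly by strong induction, using monotonicity and the identity $1+\lfloor\log_2\lfloor m/2\rfloor\rfloor=\lfloor\log_2 m\rfloor$.

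The paper instead proves the two inequalities separately:
\begin{itemize}
\item the upper bound by a leaf count on a complete mistake tree, giving $2^d\le n$;
\item the lower bound by an induction that splits $I$ into two parts, each of size at least $2^{\lfloor\log_2 m\rfloor-1}$, and queries $S=I_1$.
\end{itemize}
This is essentially the ``alternatively'' sketch you gave.

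What each route buys: your computation yields equality in one induction and stays entirely inside the paper's recursive definition of $\Ldim$. The paper's upper bound, by contrast, appeals to the mistake-tree picture. The price of your route is that it uses the full richness of the domain (every split is available) and needs the monotonicity bookkeeping. The paper's upper bound is class-independent, since $\Ldim(V)\le\log_2|V|$ holds for any finite $V$. Its lower bound needs only one good split per level rather than a description of all splits.
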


\begin{proof}
For any nonempty index set $I \subseteq [n]$, write
\[
  U_I=\{h_i : i \in I\}.
\]
We first prove
\[
  D(U_I)=|I|-1
\]
by induction on $|I|$. The upper bound is immediate from \Cref{thm:ldim-prd}. For the matching lower bound, fix $i \in I$ and choose the query
\[
  S=I\setminus\{i\}
\]
with label $1$. Then
\[
  h_i(S)=0
\]
while every $h_j$ for $j \in I\setminus\{i\}$ satisfies $h_j(S)=1$. Therefore
\[
  (U_I)_{S,1}=U_{I\setminus\{i\}} \in \E(U_I,h_i).
\]
By induction,
\[
  D(U_I,h_i)\geq 1+D(U_{I\setminus\{i\}})=1+(|I|-2)=|I|-1.
\]
Since $i$ was arbitrary, $D(U_I)\geq |I|-1$, and thus $D(U_I)=|I|-1$. Taking $I=[n]$ yields
\[
  \PRD(U_n)=n-1.
\]

For the Littlestone dimension, the upper bound is a leaf count. A complete mistake tree of depth $d$ has $2^d$ root-to-leaf label sequences, and each sequence must be realized by a different concept in $U_n$. Hence
\[
  2^d \leq n,
\]
so
\[
  \Ldim(U_n)\leq \lfloor \log_2 n \rfloor.
\]

For the lower bound, let
\[
  d=\lfloor \log_2 n \rfloor.
\]
We prove by induction on $m=|I|$ that
\[
  \Ldim(U_I)\geq \lfloor \log_2 m \rfloor.
\]
When $m=1$, the claim is trivial. Now let $m\geq 2$ and partition $I$ into two disjoint nonempty sets
\[
  I_0 \cup I_1 = I
\]
with
\[
  |I_0|,|I_1| \geq 2^{\lfloor \log_2 m \rfloor -1}.
\]
Query the point
\[
  S=I_1.
\]
Then every $h_i$ with $i \in I_1$ labels $S$ by $1$, and every $h_i$ with $i \in I_0$ labels $S$ by $0$, so the two children are exactly
\[
  (U_I)_{S,0}=U_{I_0},
  \qquad
  (U_I)_{S,1}=U_{I_1}.
\]
By induction,
\[
  \Ldim(U_{I_0}),\Ldim(U_{I_1}) \geq \lfloor \log_2 m \rfloor - 1.
\]
Therefore
\[
  \Ldim(U_I)\geq 1+\min\{\Ldim(U_{I_0}),\Ldim(U_{I_1})\}\geq \lfloor \log_2 m \rfloor.
\]
Applying this to $I=[n]$ completes the proof.
\end{proof}

\begin{corollary}[Exponential gap]\label{cor:exp-gap}
For every integer $d \geq 0$, the class $U_{2^d}$ satisfies
\[
  \Ldim(U_{2^d})=d
  \qquad\text{and}\qquad
  \PRD(U_{2^d})=2^d-1.
\]
\end{corollary}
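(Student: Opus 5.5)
This is a direct specialization of \Cref{thm:universal-coordinate} at $n=2^d$. I will apply that theorem verbatim and simplify the two resulting quantities. No new combinatorics are needed.

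\textbf{Littlestone side.} \Cref{thm:universal-coordinate} gives $\Ldim(U_{2^d})=\lfloor \log_2 2^d\rfloor$. Since $\log_2 2^d=d$ is already an integer, the floor is $d$. It is worth checking that the two halves of that proof line up exactly at powers of two. The leaf-count bound gives $2^{\Ldim}\leq 2^d$. The balanced-partition lower bound splits $I$ into halves of size $2^{d-1}$, and this meets the required size $2^{\lfloor\log_2 m\rfloor-1}$ with equality at every level. So nothing degenerate happens.

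\textbf{Repair side.} \Cref{thm:universal-coordinate} gives $\PRD(U_{2^d})=2^d-1$ directly. As a consistency check, \Cref{thm:ldim-prd} gives $|U_{2^d}|-1=2^d-1$ as an upper bound. This confirms that $U_{2^d}$ attains the upper end of the sandwich $\Ldim\leq\PRD\leq|\mathcal{H}|-1$, while $\Ldim$ is logarithmic in $|\mathcal{H}|$.

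\textbf{Edge case $d=0$.} Here $U_1$ is a singleton class. Then $\Ldim=0$ because no point splits the class, and $\PRD=0$ because $\E(U_1,h_1)=\emptyset$. Both values agree with $d=0$ and $2^0-1=0$. The only mild obstacle is checking that \Cref{thm:universal-coordinate} applies for all $n\geq 1$, including $n=1$, where its inductive base cases cover this directly. Beyond that there is no real difficulty.
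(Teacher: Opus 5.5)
Your proposal is correct and matches the paper's argument: the corollary is just \Cref{thm:universal-coordinate} specialized to $n=2^d$, where $\lfloor \log_2 2^d\rfloor = d$ and $n-1=2^d-1$. Your extra consistency checks are accurate but not needed for the proof; these are the exact balanced halving at powers of two, the attainment of the $|\mathcal{H}|-1$ upper bound, and the $d=0$ singleton case.
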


The class $U_n$ keeps ordinary online learnability logarithmic while forcing linear proper-actionable repairs. Its obstruction lies in maintaining a proper realized representative as evidence arrives.

\section{Learnable Predictions Need Not Be Actionable}\label{sec:buying-gap}

We now convert proper repair dimension into an online buying lower bound.

\begin{definition}[Induced proper actionable buying instance]
Given a finite binary class $\mathcal{H}$, define a unit-cost online buying instance as follows.
\begin{enumerate}[leftmargin=2em]
  \item The resource set is
  \[
    A_{\mathcal{H}}=\{a_h : h \in \mathcal{H}\},
  \]
  with unit costs $c(a_h)=1$.
  \item For every labeled example $(x,b)$, the service family is
  \[
    \Serve(x,b)=\{S \subseteq A_{\mathcal{H}} : S \cap \{a_h : h(x)=b\}\neq\emptyset\}.
  \]
\end{enumerate}
\end{definition}

The service family is upward closed, so this is a monotone online buying problem. The prediction-side state is still the version space
\[
  V_t=(V_{t-1})_{x_t,b_t}.
\]

\begin{definition}[Proper actionable algorithm]
A \emph{proper actionable algorithm} for the induced instance maintains:
\begin{enumerate}[leftmargin=2em]
  \item a monotone bought set
  \[
    A_0 \subseteq A_1 \subseteq \cdots \subseteq A_T \subseteq A_{\mathcal H},
  \]
  \item a live hypothesis
  \[
    h_t \in V_t
  \]
  such that
  \[
    a_{h_t}\in A_t
  \]
  for every time $t$.
\end{enumerate}
Its online buying cost is the monotone union cost
\[
  \cost(A_T)=|A_T|.
\]
\end{definition}

In this induced instance, every live feasible hypothesis serves the current labeled request exactly: if $h_t \in V_t$, then by definition
\[
  h_t(x_t)=b_t,
\]
so
\[
  a_{h_t}\in \Serve(x_t,b_t).
\]
Thus the actionable requirement is exactly the requirement to keep a bought live proper hypothesis online. Switching back to an already bought singleton would add no new cost, so the lower-bound argument must show that the adversarial worst-case repair path eliminates the current live hypothesis permanently at every forced repair.

\begin{theorem}[Proper actionable buying cost equals $1+\PRD$]\label{thm:buying-gap}
For every finite binary concept class $\mathcal{H}$:
\begin{enumerate}[leftmargin=2em]
  \item on every realizable labeled sequence, the terminal benchmark satisfies
  \[
    \uOPT = 1;
  \]
  \item the optimal deterministic worst-case proper actionable buying cost over realizable labeled sequences is exactly
  \[
    1+\PRD(\mathcal{H}).
  \]
\end{enumerate}
\end{theorem}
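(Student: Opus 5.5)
The plan is to prove the two parts separately. For the upper bound in part 2 we reuse the Bellman strategy from \Cref{thm:exact-prd}. For the lower bound we reuse that theorem's adversary, strengthened by one observation: an eliminated hypothesis never re-enters the version space.

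\textbf{Part 1 (terminal benchmark).} Since the sequence is realizable, $V_T\neq\emptyset$. Any $h\in V_T$ satisfies $h(x_t)=b_t$ for all $t$, so the singleton $\{a_h\}$ lies in every $\Serve(x_t,b_t)$ and costs $1$. Conversely, every service family excludes $\emptyset$, and the proper actionable model itself forces $A_0\ni a_{h_0}$. Hence the terminal optimum is exactly $1$. (I interpret $\uOPT$ as the cheapest single bought set serving all requests in hindsight; if $T=0$ one uses the convention that a live bought hypothesis is required.)

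\textbf{Part 2, upper bound.} The algorithm buys $a_{h_0}$ for a minimizer $h_0$ of $D(\mathcal H,\cdot)$. It keeps $h_t=h_{t-1}$ while $h_{t-1}\in V_t$. When $h_{t-1}$ is eliminated, it switches to a minimizer of $D(V_t,\cdot)$ and buys its resource. The bought set grows by at most one element per repair, so $\cost(A_T)\le 1+(\text{number of repairs})$. By the upper-bound half of \Cref{thm:exact-prd}, the number of repairs is at most $D(\mathcal H)=\PRD(\mathcal H)$.

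\textbf{Part 2, lower bound (main obstacle).} The issue flagged before the theorem is that switching back to an already bought resource is free, so we cannot simply count repairs. We must show the adversary forces $1+\PRD(\mathcal H)$ \emph{distinct} live hypotheses. The key fact is that version spaces only shrink: $V_t\subseteq V_{t-1}$. Therefore a hypothesis falsified at some step is never again in any later $V_s$, and its resource can never again serve as the live one.

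The adversary works as follows. At current state $V$ with the learner's live hypothesis $h\in V$, if $\E(V,h)\neq\emptyset$, it reveals the example realizing some
\[
  U\in\arg\max_{U'\in\E(V,h)}\bigl(1+D(U')\bigr).
\]
This is well defined against a deterministic learner. After the step, the new state $U$ satisfies
\[
  D(U)=D(V,h)-1\ge D(V)-1.
\]
So the potential $D(\cdot)$ of the version space drops by at most one per adversary step, and every step falsifies the current live hypothesis. The process stops only when $\E(V,h)=\emptyset$, which means $D(V,h)=0$ and hence $D(V)=0$. Starting from $D(\mathcal H)=\PRD(\mathcal H)$, the adversary therefore makes at least $\PRD(\mathcal H)$ steps, and each step permanently eliminates the hypothesis live just before it.

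It remains to count distinct live hypotheses, including any voluntary switches by the learner, which only add purchases. The live hypotheses $h^{(0)},h^{(1)},\dots$ at the moments of the adversary's steps are pairwise distinct: each $h^{(k)}$ is eliminated at step $k$, so it is absent from every later version space. The final live hypothesis is also new. This gives at least $1+\PRD(\mathcal H)$ distinct resources in $A_T$, so $\cost(A_T)\ge 1+\PRD(\mathcal H)$, which matches the upper bound.
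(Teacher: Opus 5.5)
Your proposal is correct and follows the paper's proof. The Bellman strategy, buying one singleton per activation, gives the upper bound $1+\PRD(\mathcal{H})$. The greedy $\arg\max$ adversary from \Cref{thm:exact-prd}, together with the fact that shrinking version spaces make the eliminated live hypotheses and the final survivor pairwise distinct, gives the matching lower bound. The one difference is that you count adversary steps with an explicit potential ($D(U)=D(V,h)-1\ge D(V)-1$ until $D(V)=0$) rather than invoking the induction in \Cref{thm:exact-prd}; this is a valid self-contained substitute that also makes clear why voluntary switches cannot help the learner.
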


\begin{proof}
Fix any realizable labeled sequence. The terminal version space is nonempty by realizability. Choosing any final surviving hypothesis $h^\star \in V_T$ gives the singleton realized set $\{a_{h^\star}\}$ of cost one, so the terminal benchmark is
\[
  \uOPT=1.
\]

We now turn to worst-case deterministic proper actionable cost.

For the upper bound, run the Bellman strategy from \Cref{thm:exact-prd}. Whenever the strategy activates a live hypothesis $h_t$, buy the singleton resource $a_{h_t}$ if it has not been bought before. Since every live feasible hypothesis serves the current labeled request exactly, the strategy remains proper and actionable. The Bellman strategy makes at most $\PRD(\mathcal{H})$ repairs in the worst case, so it activates at most $1+\PRD(\mathcal{H})$ live hypotheses. Hence its monotone union cost is at most
\[
  1+\PRD(\mathcal{H}).
\]

For the lower bound, fix any deterministic proper actionable algorithm and run the recursive lower-bound adversary from \Cref{thm:exact-prd}. This adversary produces a realizable labeled sequence on which the algorithm's current live hypothesis is forced to be eliminated at least
\[
  \PRD(\mathcal{H})
\]
times. Let
\[
  h^{(0)},h^{(1)},\dots,h^{(\PRD(\mathcal{H})-1)}
\]
be the first $\PRD(\mathcal{H})$ live hypotheses eliminated at those forced repairs. Because version spaces only shrink, each eliminated live hypothesis is permanently removed from the version space, so these hypotheses are pairwise distinct. Terminal realizability leaves a final live survivor
\[
  h^{(\PRD(\mathcal{H}))}\in V_T
\]
that is distinct from all eliminated hypotheses. Whenever one of these live hypotheses is active, actionability requires its singleton resource to lie in the monotone bought set. Therefore the final bought set contains the distinct singletons
\[
  a_{h^{(0)}},a_{h^{(1)}},\dots,a_{h^{(\PRD(\mathcal{H}))}},
\]
so its monotone union cost on this adversarial sequence is at least
\[
  1+\PRD(\mathcal{H}).
\]
Voluntary switches or reactivations of already bought still-feasible hypotheses cannot reduce this bound, because the count already comes from the pairwise distinct forced-eliminated live hypotheses together with the final live survivor.

Combining the upper and lower bounds proves that the optimal deterministic worst-case proper actionable buying cost over realizable labeled sequences is exactly
\[
  1+\PRD(\mathcal{H}).
\]
\end{proof}

\begin{corollary}[Learnable predictions need not be actionable]\label{cor:learnable-not-actionable}
For the universal coordinate class $U_{2^d}$, the induced proper actionable buying instance satisfies
\[
  \uOPT=1,
  \qquad
  \Ldim(U_{2^d})=d,
  \qquad
  \text{deterministic proper actionable cost}=2^d.
\]
\end{corollary}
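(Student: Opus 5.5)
This corollary is a direct specialization, so the plan is to combine \Cref{thm:buying-gap} with \Cref{cor:exp-gap} at $\mathcal{H}=U_{2^d}$, with no new combinatorics.

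First, I will apply part 1 of \Cref{thm:buying-gap} to the induced proper actionable buying instance of $U_{2^d}$. This gives $\uOPT=1$ on every realizable labeled sequence. The only check needed is that $U_{2^d}$ is a finite binary concept class, which holds because it has $2^d$ concepts over the finite domain $2^{[2^d]}$.

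Second, I will read off $\Ldim(U_{2^d})=d$ from \Cref{cor:exp-gap}. That corollary is \Cref{thm:universal-coordinate} at $n=2^d$, where $\lfloor \log_2 2^d\rfloor=d$.

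Third, I will invoke part 2 of \Cref{thm:buying-gap}. It says the optimal deterministic worst-case proper actionable cost equals $1+\PRD(U_{2^d})$. Substituting $\PRD(U_{2^d})=2^d-1$ from \Cref{cor:exp-gap} gives cost exactly $2^d$. The case $d=0$ is consistent: $U_1$ is a singleton, so the cost is $1=2^0$ and $\Ldim=0$.

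There is no real obstacle here. The only point to state explicitly is how ``deterministic proper actionable cost'' in the corollary is meant: it is the optimal deterministic worst-case value over realizable sequences. Under that reading, the equality $2^d$ holds as both an upper bound, via the Bellman strategy, and a lower bound, via the recursive adversary, exactly as established in \Cref{thm:buying-gap}.
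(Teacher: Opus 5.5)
Your proposal is correct and matches the paper's proof, which likewise just combines \Cref{cor:exp-gap} (giving $\Ldim(U_{2^d})=d$ and $\PRD(U_{2^d})=2^d-1$) with \Cref{thm:buying-gap} (giving $\uOPT=1$ and optimal deterministic worst-case cost $1+\PRD$). Your explicit reading of ``deterministic proper actionable cost'' as the optimal deterministic worst-case value over realizable sequences, and your $d=0$ check, are harmless additions.
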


\begin{proof}
Combine \Cref{cor:exp-gap,thm:buying-gap}.
\end{proof}

\section{Component Classes and Actual-Cost Transfer}\label{sec:transfer}

The component model gives a resource-bundle interpretation of actionable predictions. Each component is a subtask with a finite proper hypothesis class; a live hypothesis realizes a resource bundle $\psi_i(h)$; arriving labels shrink the component version space; and buying requests require the current bundle to serve. Shared resources across components are counted once by the terminal union benchmark. Scale distortion controls the cost of one realized bundle, while PRD-load congestion controls how often shared resources can be charged across repaired components.

\begin{definition}[Monotone online buying problem]
A \emph{monotone online buying problem} is a tuple
\[
  P=(A,c,\mathcal{R},\Serve),
\]
where $A$ is a finite resource set, $c:A\to \R_{>0}$ assigns positive costs, $\mathcal{R}$ is the request universe, and every service family
\[
  \Serve(r)\subseteq 2^A
\]
is upward closed.
\end{definition}

\begin{definition}[Componentized actionable prediction class]
Fix a monotone online buying problem $P=(A,c,\mathcal{R},\Serve)$. A \emph{componentized actionable prediction class} consists of finite binary classes
\[
  \mathcal{H}_i \subseteq \{0,1\}^{X_i},
  \qquad
  i \in [K],
\]
together with realization maps
\[
  \psi_i:\mathcal{H}_i \to 2^A.
\]

At each time $t$, the request carries a component label $\ell_t \in [K]$, a labeled example $(x_t,b_t)$ with $x_t \in X_{\ell_t}$ and $b_t \in \{0,1\}$, and a buying request $r_t \in \mathcal{R}$. Only component $\ell_t$ updates at time $t$.

Component $i$ starts from
\[
  V_i^0=\mathcal{H}_i
\]
and updates by
\[
  V_i^t=(V_i^{t-1})_{x_t,b_t}
\]
whenever $\ell_t=i$. The execution is \emph{valid} if every updated version space remains nonempty.
\end{definition}

\begin{definition}[Serving hypotheses and Bellman-compatible service]
When request $t$ updates component $i$, define the serving set
\[
  G_i^t=\{h \in V_i^t : \psi_i(h)\in \Serve(r_t)\}.
\]
The execution has \emph{Bellman-compatible service} if for every updated component $i$ and every valid time $t$,
\[
  G_i^t \cap \arg\min_{h \in V_i^t} D(V_i^t,h)\neq\emptyset.
\]
\end{definition}

\begin{definition}[Bellman stage repair and service failures]
Bellman stage repair maintains a global monotone bought resource set and, for each active component $i$, a current live hypothesis $h_i^t$.

At the first request to component $i$, the algorithm activates a hypothesis
\[
  h_i^t \in G_i^t \cap \arg\min_{h \in V_i^t} D(V_i^t,h).
\]
It then adds the realized set $\psi_i(h_i^t)$ to the global bought resource set.
At any later request to component $i$:
\begin{enumerate}[leftmargin=2em]
  \item if the current live hypothesis $h_i^{t-1}$ still lies in $G_i^t$, keep it;
  \item otherwise activate a new hypothesis
  \[
    h_i^t \in G_i^t \cap \arg\min_{h \in V_i^t} D(V_i^t,h).
  \]
  It then adds $\psi_i(h_i^t)$ to the global bought resource set.
\end{enumerate}

A \emph{service failure} for component $i$ is an activation at time $t$ such that the previous live hypothesis remains feasible:
\[
  h_i^{t-1} \in V_i^t
\]
but does not serve the current request. Let $I_i$ be the number of service failures of component $i$.
\end{definition}

\begin{remark}
An activation need not create new online buying cost: some or all of the resources in $\psi_i(h_i^t)$ may already have been bought earlier by the same component or another component. The load theorem below counts activations, while the transfer theorem converts their multiplicity accounting into an upper bound on the actual monotone union cost.
\end{remark}

\begin{theorem}[PRD load bound]\label{thm:load}
Fix a valid execution with Bellman-compatible service. For every active component $i$, let
\[
  D_i=\PRD(\mathcal{H}_i)=D(\mathcal{H}_i).
\]
Then Bellman stage repair makes at most
\[
  L_i=1+D_i+I_i
\]
activations in component $i$.
\end{theorem}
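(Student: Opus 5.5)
Our plan is a potential argument on the Bellman value of the current live pair. Fix an active component $i$ and consider only the times at which component $i$ is updated; all other requests leave its version space and live hypothesis untouched. Every activation after the first is triggered because $h_i^{t-1}\notin G_i^t$. There are two cases. In the first, $h_i^{t-1}\notin V_i^t$: the new label eliminated it, and we call this a \emph{forced repair}. In the second, $h_i^{t-1}\in V_i^t$ but it does not serve: this is by definition a service failure. So the activation count is $1+(\text{forced repairs})+I_i$, and it suffices to show that the number of forced repairs is at most $D_i$.

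To bound forced repairs, we track the potential $\Phi_t=D(V_i^t,h_i^t)$ and prove two facts.

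\textbf{(a)} Between activations $\Phi$ never increases. If the live hypothesis is kept at an update with $W=(V_i^{t-1})_{x_t,b_t}$ and $h_i^{t-1}(x_t)=b_t$, then \Cref{lem:pair-monotone} gives $D(W,h)\le D(V,h)$.

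\textbf{(b)} Every activation selects an argmin hypothesis. This is legitimate by Bellman-compatible service, so right after activation $\Phi_t=D(V_i^t)$.

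At a forced repair at time $t$, let $V=V_i^{t-1}$, $h=h_i^{t-1}$ and $W=V_i^t=V_{x_t,b_t}$. Since $h(x_t)\ne b_t$ and $W$ is nonempty, $W\in\E(V,h)$, hence $D(V,h)\ge 1+D(W)$. The new potential is $D(W)$, so the potential strictly drops by at least one.

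The main obstacle is the service-failure activations. There the live hypothesis is replaced by the argmin $h'$ of the same version space, so $D(V_i^t,h')=D(V_i^t)\le D(V_i^t,h_i^{t-1})$, where the previous hypothesis's value is already accounted for by (a) applied to the update. Thus service failures never increase $\Phi$, although they may fail to decrease it. We handle them by charging them separately to $I_i$, which is exactly why $I_i$ appears in $L_i$.

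Combining the pieces: after the first activation, $\Phi=D(V_i^{t_0})\le D(\mathcal{H}_i)=D_i$. The bound $D(V_i^{t_0})\le D(\mathcal{H}_i)$ follows from repeated application of \Cref{lem:restriction}, since the first request may already restrict $\mathcal{H}_i$. Along the whole run $\Phi$ is nonnegative and non-increasing, and it drops by at least $1$ at each forced repair. Hence there are at most $D_i$ forced repairs.

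One detail needs care. At a forced repair the step in which the label arrives and the step in which the new argmin is chosen coincide, and we compare against $D(W)$ directly. At a kept step we apply \Cref{lem:pair-monotone} with the label consistent with $h$. Summing up gives at most $1+D_i+I_i=L_i$ activations.
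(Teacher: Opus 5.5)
Your proposal is correct and follows the paper's proof essentially step for step. You use the same potential $D(V_i^t,h_i^t)$, the same split into the initial activation, elimination-driven repairs (where the recursion forces a drop of at least one) and service failures (where \Cref{lem:pair-monotone} shows no increase), and you reach the same bound of at most $D_i$ elimination-driven repairs. You also spell out two points the paper leaves implicit: the potential does not increase at non-activation updates (by \Cref{lem:pair-monotone}), and $D(V_i^{t_0})\le D(\mathcal{H}_i)$ follows from \Cref{lem:restriction}.
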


\begin{proof}
Fix an active component $i$. After every activation, let
\[
  p_t = D(V_i^t,h_i^t)
\]
be the Bellman potential of the current live hypothesis.

At the first activation in component $i$, the algorithm chooses a Bellman minimizer in the current version space, so
\[
  p_t = D(V_i^t)\leq D(\mathcal{H}_i)=D_i.
\]

Now consider any later update time $t$ for component $i$.

If no activation occurs, the component contributes nothing to the count.

If a service-failure activation occurs, then the previous hypothesis remains feasible:
\[
  h_i^{t-1}\in V_i^t.
\]
The algorithm switches to a Bellman minimizer in $V_i^t$, so
\[
  p_t = D(V_i^t)\leq D(V_i^t,h_i^{t-1}).
\]
Because the label update kept $h_i^{t-1}$ live, Lemma~\ref{lem:pair-monotone} gives
\[
  D(V_i^t,h_i^{t-1})\leq D(V_i^{t-1},h_i^{t-1})=p_{t-1}.
\]
Hence a service-failure activation does not increase the potential.

If an activation occurs because the previous live hypothesis is eliminated, then
\[
  V_i^t \in \E(V_i^{t-1},h_i^{t-1}).
\]
Since the new hypothesis is a Bellman minimizer in $V_i^t$,
\[
  p_t = D(V_i^t).
\]
By the recursion for $D(V_i^{t-1},h_i^{t-1})$,
\[
  p_{t-1}
  =
  D(V_i^{t-1},h_i^{t-1})
  \geq
  1 + D(V_i^t)
  =
  1+p_t.
\]
Thus every elimination-driven activation decreases the potential by at least one.

The potential starts at most $D_i$, never increases, and drops by at least one at every elimination-driven activation. Therefore there are at most $D_i$ elimination-driven activations after the first activation. Adding the initial activation and the $I_i$ service-failure activations yields
\[
  L_i=1+D_i+I_i
\]
activations in component $i$.
\end{proof}

\begin{definition}[Terminal deduplicated benchmark]
At terminal time $T$, define
\[
  \uOPT_C
  =
  \min_{h_i \in V_i^T\ \forall i}
  \cost\!\Bigl(\bigcup_{i=1}^K \psi_i(h_i)\Bigr),
\]
with $\uOPT_C=+\infty$ if some terminal version space is empty.
\end{definition}

\begin{definition}[Scale distortion]
Component $i$ has \emph{scale distortion} $(\alpha_i,\lambda_i)$ if $\alpha_i>0$, $\lambda_i\geq 1$, and:
\begin{enumerate}[leftmargin=2em]
  \item every Bellman stage-repair activation in component $i$ realizes resources of total cost at most $\lambda_i \alpha_i$;
  \item every terminal feasible hypothesis $h_i \in V_i^T$ satisfies
  \[
    \cost(\psi_i(h_i)) \geq \alpha_i.
  \]
\end{enumerate}
Let
\[
  \lambda=\max_i \lambda_i.
\]
\end{definition}

\begin{definition}[PRD-load congestion]
Given component loads $L_i$, define the \emph{PRD-load congestion}
\[
  \GammaD
  =
  \max_{a \in A}
  \sum_{i : \exists h \in \mathcal{H}_i,\ a \in \psi_i(h)} L_i.
\]
Define the unweighted overlap parameter
\[
  \GammaZero
  =
  \max_{a \in A}
  \bigl|\{i : \exists h \in \mathcal{H}_i,\ a \in \psi_i(h)\}\bigr|.
\]
\end{definition}

\begin{theorem}[PRD transfer theorem]\label{thm:transfer}
Consider a valid execution of a componentized actionable prediction class under Bellman stage repair. Assume:
\begin{enumerate}[leftmargin=2em]
  \item every active component satisfies Bellman-compatible service;
  \item component $i$ has proper repair dimension $D_i$, service-failure count $I_i$, and load
  \[
    L_i=1+D_i+I_i;
  \]
  \item component $i$ has scale distortion $(\alpha_i,\lambda_i)$;
  \item $\uOPT_C<+\infty$.
\end{enumerate}
Then the actual bought-resource cost satisfies
\[
  \cost \leq \lambda \GammaD \uOPT_C.
\]
If $\uOPT_C=+\infty$, the statement is vacuous.
\end{theorem}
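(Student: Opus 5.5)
The plan is a standard charging argument. First bound the actual monotone union cost by the total cost of all activations. Then bound each component's activation cost by its load times $\lambda\alpha_i$. Finally, charge each $\alpha_i$ to the resources of a terminal optimal hypothesis and interchange sums, which is where the congestion parameter $\GammaD$ enters.

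\emph{Step 1: from union cost to activations.} The global bought set is the union of the realized sets $\psi_i(h_i^t)$ over all activations of all components. Since costs are positive, the cost of a union is at most the sum of the costs of its parts. Hence
\[
  \cost \leq \sum_{i \text{ active}} \ \sum_{\text{activations of } i} \cost\bigl(\psi_i(h_i^t)\bigr).
\]
By the first clause of scale distortion, each activation in component $i$ costs at most $\lambda_i\alpha_i\leq\lambda\alpha_i$. The hypotheses of the theorem are exactly those of \Cref{thm:load}: a valid execution with Bellman-compatible service. So component $i$ makes at most $L_i=1+D_i+I_i$ activations, and
\[
  \cost \leq \lambda \sum_{i \text{ active}} L_i\,\alpha_i .
\]

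\emph{Step 2: charging $\alpha_i$ to the benchmark.} Since $\uOPT_C<+\infty$, every terminal version space is nonempty. Fix a minimizer $(h_i^\star)_{i}$ with $h_i^\star\in V_i^T$, and set $S^\star=\bigcup_i\psi_i(h_i^\star)$, so that $\cost(S^\star)=\uOPT_C$. For an inactive component, $V_i^T=\mathcal{H}_i$, which is still nonempty; in any case inactive components do not appear in the sum. By the second clause of scale distortion,
\[
  \alpha_i\leq \cost\bigl(\psi_i(h_i^\star)\bigr)=\sum_{a\in\psi_i(h_i^\star)}c(a).
\]
Substituting this bound and interchanging the order of summation gives
\[
  \sum_{i \text{ active}} L_i\alpha_i
  \leq \sum_{a\in S^\star} c(a)\sum_{i:\,a\in\psi_i(h_i^\star)} L_i .
\]

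\emph{Step 3: applying congestion.} Whenever $a\in\psi_i(h_i^\star)$, the index $i$ belongs to the set $\{i:\exists h\in\mathcal{H}_i,\ a\in\psi_i(h)\}$. Because all loads are nonnegative, the inner sum is at most $\GammaD$. Therefore
\[
  \sum_{i \text{ active}} L_i\alpha_i \leq \GammaD \cost(S^\star) = \GammaD\,\uOPT_C,
\]
and combining this with Step 1 yields $\cost\leq\lambda\GammaD\uOPT_C$.

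The only delicate point is the summation interchange in Step 2. A resource shared by several terminal bundles is paid once in $\uOPT_C$ but charged once per component containing it. This multiplicity must be absorbed by the load-weighted overlap count in $\GammaD$, rather than by $\GammaZero$ alone. Everything else is direct bookkeeping from \Cref{thm:load} and the definition of scale distortion.
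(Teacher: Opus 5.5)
Your proposal is correct and follows the paper's proof essentially step for step. You bound the monotone union cost by $\lambda\sum_i L_i\alpha_i$ via \Cref{thm:load} and the upper scale-distortion clause, charge each $\alpha_i$ to a terminal bundle $\psi_i(h_i)$, and interchange sums so that $\GammaD$ absorbs the load-weighted multiplicity; the only differences (fixing a benchmark minimizer up front, summing only over active components) are harmless because all loads and costs are nonnegative.
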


The singleton lower bound proves PRD necessity in the sharpest unit-cost embedding; \Cref{thm:transfer} gives the positive side once realized bundles have bounded scale and bounded overlap.

\begin{proof}
By \Cref{thm:load}, component $i$ makes at most $L_i$ activations. Every such activation realizes resources of total cost at most $\lambda_i \alpha_i \leq \lambda \alpha_i$. Since the actual monotone union cost is bounded by this activation multiplicity, we get
\[
  \cost \leq \lambda \sum_{i=1}^K L_i \alpha_i.
\]

Fix any terminal selector $h_i \in V_i^T$ for every component. By the lower half of the scale-distortion condition,
\[
  \alpha_i \leq \cost(\psi_i(h_i)).
\]
Therefore
\[
  \cost \leq \lambda \sum_{i=1}^K L_i \cost(\psi_i(h_i)).
\]

Expand the right-hand side resource by resource:
\[
  \sum_{i=1}^K L_i \cost(\psi_i(h_i))
  =
  \sum_{a \in A} c(a)\sum_{i : a \in \psi_i(h_i)} L_i.
\]
By the definition of $\GammaD$, every inner sum is at most $\GammaD$, so
\[
  \sum_{i=1}^K L_i \cost(\psi_i(h_i))
  \leq
  \GammaD \sum_{a \in \cup_i \psi_i(h_i)} c(a)
  =
  \GammaD \cost\!\Bigl(\bigcup_{i=1}^K \psi_i(h_i)\Bigr).
\]
Minimizing over all terminal selectors proves
\[
  \cost \leq \lambda \GammaD \uOPT_C.
\]
\end{proof}

\begin{corollary}[Uniform corollary]\label{cor:uniform}
Assume every active component has exact service, meaning that every live feasible hypothesis in the current version space serves every valid current request. Then $I_i=0$, and if in addition every component satisfies
\[
  D_i=\PRD(\mathcal{H}_i)\leq D.
\]
Then Bellman stage repair satisfies
\[
  \cost \leq \lambda (1+D)\GammaZero \uOPT_C.
\]
\end{corollary}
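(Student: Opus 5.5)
The corollary follows by specializing \Cref{thm:transfer}, so the plan is to check its hypotheses and then bound $\GammaD$ by $(1+D)\GammaZero$.

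\textbf{Step 1: no service failures.} Suppose exact service holds. A service failure at time $t$ means an activation in which $h_i^{t-1}\in V_i^t$ but $\psi_i(h_i^{t-1})\notin\Serve(r_t)$. Exact service says every live feasible hypothesis in $V_i^t$ serves the current valid request. So this situation cannot occur, and $I_i=0$ for every component.

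\textbf{Step 2: Bellman-compatible service.} Exact service gives $G_i^t=V_i^t$. Hence $G_i^t\cap\arg\min_h D(V_i^t,h)$ is the full argmin set, which is nonempty because $V_i^t$ is nonempty and finite. Bellman-compatible service therefore holds automatically. The remaining hypotheses of \Cref{thm:transfer} (scale distortion and $\uOPT_C<+\infty$) are carried over unchanged; if $\uOPT_C=+\infty$ the bound is vacuous.

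\textbf{Step 3: bound the congestion.} With $I_i=0$ and $D_i\le D$, each load satisfies $L_i=1+D_i\le 1+D$. For each resource $a\in A$,
\[
  \sum_{i:\exists h\in\mathcal H_i,\ a\in\psi_i(h)} L_i \le (1+D)\,\bigl|\{i:\exists h\in\mathcal H_i,\ a\in\psi_i(h)\}\bigr| \le (1+D)\GammaZero .
\]
Taking the maximum over $a$ gives $\GammaD\le(1+D)\GammaZero$.

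\textbf{Step 4: conclude.} Substituting this into $\cost\le\lambda\GammaD\uOPT_C$ from \Cref{thm:transfer} yields
\[
  \cost\le\lambda(1+D)\GammaZero\uOPT_C .
\]

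\textbf{Main obstacle.} The only delicate point is Step 1: making sure the definition of a service failure really requires the previous hypothesis to be feasible but non-serving. Exact service rules out exactly that case, so every activation is elimination-driven, which is what \Cref{thm:load} needs to give $L_i=1+D_i$.
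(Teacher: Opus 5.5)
Your proposal is correct and follows the paper's proof: exact service forces $I_i=0$, so $L_i\le 1+D$, so $\GammaD\le(1+D)\GammaZero$, and \Cref{thm:transfer} finishes the argument. Your Step 2 also checks that exact service gives $G_i^t=V_i^t$ and hence Bellman-compatible service. The paper's proof leaves that hypothesis of \Cref{thm:load,thm:transfer} implicit, so including it makes your argument slightly more complete.
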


\begin{proof}
Exact service gives $I_i=0$, so \Cref{thm:load} yields
\[
  L_i \leq 1+D
\]
for every active component. Hence
\[
  \GammaD
  =
  \max_{a \in A}
  \sum_{i : \exists h \in \mathcal{H}_i,\ a \in \psi_i(h)} L_i
  \leq
  (1+D)\GammaZero.
\]
Applying \Cref{thm:transfer} gives the result.
\end{proof}

\section{Discussion}

Proper repair dimension isolates the online burden that remains invisible to Littlestone dimension. Littlestone dimension classifies ordinary online learnability; PRD classifies the proper repair burden of maintaining a live realized representative; actionable online buying is governed by PRD in the singleton embedding; and bounded PRD together with bounded realization congestion gives the positive transfer guarantee for componentized buying.

The natural next directions are randomized repair dimensions, improper actionable realizations, and structural design rules that minimize PRD-load congestion without losing predictive expressiveness.

\bibliographystyle{plainnat}
\bibliography{refs}

\end{document}